\let\eps=\varepsilon
\DeclareMathOperator{\KS}{\mathrm{C}}
\DeclareMathOperator{\KP}{\mathrm{K}}
\DeclareMathOperator{\KA}{\mathrm{KA}}
\DeclareMathOperator{\Dim}{\mathrm{Dim}}
\newcommand{\cnd}{\mskip 0.8mu | \mskip 0.8mu}
\newtheorem*{theorem}{Theorem}
\newtheorem*{lemma}{Lemma}
\theoremstyle{remark}
	\newtheorem{remark}{Remark}
\title{Conditional normality and finite-state dimensions revisited}
\author{Alexander Shen\thanks{LIRMM, Univ Montpellier, CNRS, Montpellier, France\\ \texttt{sasha.shen@gmail.com}, \texttt{alexander.shen@lirmm.fr}.\\ Supported by ANR grant ANR-21-CE48-0023 FLITTLA.}}
\date{}
\begin{document}
\maketitle
\begin{flushright}
\emph{To Jarkko Kari, on the occasion of his 60th birthday}
\end{flushright}

\begin{abstract}
The notion of a normal bit sequence was introduced by Borel in 1909; it was the first definition of an individual random object. Normality is a weak notion of randomness requiring only that all $2^n$ factors (substrings) of arbitrary length~$n$ appear with the same limit frequency $2^{-n}$. Later many stronger definitions of randomness were introduced, and in this context normality found its place as ``randomness against a finite-memory adversary''. A quantitative measure of finite-state compressibility was also  introduced (the \emph{finite-state dimension}) and normality means that the finite state dimension is maximal (equals~$1$).

Recently Nandakumar, Pulari and S (2023) introduced the notion of \emph{relative} finite-state dimension for a binary sequence with respect to some other binary sequence (treated as an \emph{oracle}), and the corresponding notion of \emph{conditional} (relative) \emph{normality}. (Different notions of conditional randomness were considered before, but not for the finite memory case.) They establish equivalence between the block frequency and the gambling approaches to conditional normality and finite-state dimensions.

In this note we revisit their definitions and explain how this equivalence can be obtained easily by generalizing known characterizations of (unconditional) normality and dimension in terms of compressibility (finite-state complexity), superadditive complexity measures and gambling (finite-state gales), thus also answering some questions left open in the above-mentioned paper.

\end{abstract}

\section{Introduction}

Let us start by recalling what (unconditional) Borel normality is and how it is related to finite-state gambling.  (For the details, proofs, and references to original papers see, e.g., \cite{ks}.)

\subsection{Block frequencies}

Let $\alpha=a_1a_2a_3\ldots$ be a binary sequence. Fix some $k$ and split this sequence into blocks of length $k$. For every $N$ consider the first $N$ blocks ($kN$ bits in total) and, for each $k$-bit string, consider its frequency among these blocks. If all the frequencies converge to $1/2^k$ as $N\to \infty$, and this happens for every $k$ (block size), the sequence $\alpha$ is called \emph{normal}. 

More general, we consider the entropy $H_{k,N}(\alpha)$ of the distribution on the first $N$ blocks of length $k$. Then we consider $\liminf$ and $\limsup$ of this entropy as $N$ goes to infinity. For normal sequences the distribution converges to the uniform distribution, so both $\liminf$ and $\limsup$ are $k$ (for every $k$). This gives a characterization of normal sequences.  In a general case we divide $\limsup_N$ and $\liminf_N$ of $H_{k,N}(\alpha)$ by $k$ and take the limit as $k\to\infty$ (one can prove that this limit always exists and coincides with the infimum over $k$). These two limits  (for $\liminf$ and $\limsup$) are called respectively \emph{finite-state dimension} of $\alpha$, denoted by $\dim_{\mathrm{FS}}(\alpha)$, and \emph{strong finite-state dimension} of $\alpha$, denoted by $\Dim_{\mathrm{FS}}(\alpha)$. The letters ``FS'' stand for ``finite-state''; we omit them in the sequel since we do not consider other dimensions. 

As we have noted, we have $\dim(\alpha)=1$ (and therefore $\Dim(\alpha)=1$) for a normal sequence $\alpha$. One can prove that the reverse is also true: if $\dim(\alpha)=1$, then $\alpha$ is normal ($\Dim(\alpha)=1$ is not enough).

Instead of splitting $\alpha$ into $k$-bit blocks, we may consider a sliding window of size $k$, and consider the distribution of $k$-bit strings among the first $N$ non-aligned blocks  (thus using only $N+k-1$ bits of the sequence instead of $Nk$). Then we proceed as before; one can prove that the finite-state dimension and strong finite-state dimension remain the same (as for splitting into aligned blocks).

\subsection{Gambling and gales}\label{subsec:gambling-gales}

The finite-state dimensions of a bit sequence $\alpha$ have natural interpretation in terms of gambling against $\alpha$. Consider a guessing game where a gambler starts with initial capital $1$ and, before getting the next bit of $\alpha$, makes a bet by splitting her capital into bets on $0$ and $1$; one part is lost and the other is doubled. In general, the gambler strategy can be fully described by a function $m$ defined on binary strings: $m(X)$ is the capital of the gambler after having played against binary string $X$. The gambling rules mean that this function is a \emph{martingale}, i.e., $m(\varepsilon)=1$ for the empty string $\varepsilon$, and $m(X)=[m(X0)+m(X1)]/2$ for every binary string $X$ and its two possible extensions $X0$ and $X1$. We restrict our attention to \emph{finite-state} martingales that correspond to gambling strategies with \emph{finite memory}. This means that at every moment gambler is in one of finitely many states; the state determines the ratio of the bets on $0$ and~$1$ (we assume that this ratio is a rational number); the next state is determined by the previous one and the observed bit.

Intuitively, if a sequence $\alpha$ is ``non-random'', the regularities in $\alpha$ can be used to gamble against it; more non-randomness means faster growth of the martingale along the sequence. In the ultimate case when the sequence is all zeros and we bet on zero all the time, the capital after $n$ bets is $2^n$. On the other hand, a cautious gambler that always splits the capital evenly between $0$ and $1$ keeps her original capital untouched (the martingale is always $1$). We measure the growth in logarithmic scale and consider the ratio $(\log m(a_1\ldots a_n))/n$. In the two extreme cases mentioned this ratio is $1$ and $0$ respectively. In the general case, for a given sequence $\alpha=a_1a_2\ldots$ and a finite-state gambler $G$ (with corresponding martingale $m_G$), we consider
\[
\limsup_n \frac{\log m_G(a_1\ldots a_n)}{n}
\quad \text{and}\quad
\liminf_n \frac{\log m_G(a_1\ldots a_n)}{n}
\]
Then we take supremum over finite-state gamblers $G$ (we are interested in the \emph{best} gamblers, not in the worst ones) and get two quantities that are directly related to finite-state dimension: one can prove that the first one (with $\limsup$) is $1-\dim(\alpha)$, and the second one (with $\liminf$) is $1-\Dim(\alpha)$.

Usually this procedure is explained in a slightly different language: instead of measuring the capital growth rate, we impose a ``tax'' by multiplying the capital by $2^{s-1}$ after each game (a zero tax if $s=1$, and $50\%$ tax for $s=0$; in the latter case we need to bet all the capital on the correct bit all the time just to keep the capital unchanged). In other words, we consider finite-state \emph{$s$-gales} that satisfy the equation
\[
2^{s-1} m(X)=\frac{m(X0)+m(X1)}{2},
\quad \text{or} \quad
2^s m(X)= m(X0)+m(X1).
\]
Then we consider the infimum of the values of $s$ that allow some finite-state gambler to \emph{win} despite this tax. For strong finite-state dimension \emph{winning} means that the value of $s$-gale \emph{tends to infinity}, for finite-state dimension winning means that it is \emph{unbounded}.

One can consider also combinations of finitely many martingales ($l$-account gales). This means that the initial capital is split between $l$ finite-state gamblers and each of them plays independently using her own capital. Note that while the total memory of $l$ gamblers is finite, they cannot be always replaced by one finite-state gambler, since the combined ratio depends on the current capitals of $l$ individual gamblers and does not necessarily belong to any finite set. Still this generalization does not affect the finite-state dimension and strong finite-state dimension. (For finite-state dimension it is easier to see than for strong finite-state dimension, but it is true in both cases.)

\subsection{Introducing oracles \textup(conditions\textup)}

Informally speaking, conditional randomness of a bit sequence $\alpha=a_1a_2\ldots$ with respect to some other bit sequence $\beta=b_1b_2\ldots$ (considered as a condition) means that we cannot find any regularities in $\alpha$, or cannot win the gambling game against $\alpha$ \emph{even if we are given access to $\beta$ as an oracle} (for free). For most notions of algorithmic randomness the conditional versions are well understood (and follow the standard relativization scheme used in computability theory: all computable objects are replaced by $\beta$-computable objects, where $\beta$ is used as an oracle). However, for finite-state randomness (normality) this general scheme does not work: it is not clear \emph{a priori} how the finite memory gambler should access the oracle. If we think about this more closely, we see that there are some important choices --- at least two of them --- that should be made when we define conditional normality.

\subsubsection{Look-ahead while gambling}

First, we need to decide whether we allow looking ahead or not. Recall that $\alpha$ is normal if a finite memory gambler cannot win against $\alpha$.  Now we need to allow the gambler to access some oracle $\beta$. Taking into account the general spirit of the definition of normality, it is natural to assume that this access is somehow local and synchronized with the betting position in the sequence~$\alpha$. 

But do we permit looking ahead or not? Assume that $\alpha=a_1a_2\ldots$ is some ``perfectly random'' sequence and $\beta = b_1b_2\ldots$ is $\alpha$ shifted to the right: $a_{n}=b_{n+1}$; does this make $\alpha$ non-normal with respect to $\beta$? The answer suggested by Nandakumar, Pulari and S in~\cite{nps} is \emph{yes}, we do allow looking ahead for $O(1)$ future oracle bits, and $\alpha$ in this example is not $\beta$-normal. More precisely, for each finite-state gambler some look-ahead constant $c$ is fixed; the gambler sees next $c$ bits $b_{i+1}\ldots b_{i+c}$ of the oracle  $\beta$ (as well as $b_i$ itself) when gambling on some bit $a_i$ of~$\alpha$.

\subsubsection{Negligible frequencies}

How can we define conditional normality of some sequence $\alpha$ in terms of block frequencies?  The condition is some other sequence $\beta$. It is natural to cut both sequences $\alpha$ and $\beta$ in the same places, so (for a given block size $k$) we get pairs of $k$-bit strings $(A,B)$. (To get $N$ those pairs we use $Nk$-bit prefixes of $\alpha$ and $\beta$.) 

What do we require from this sequence of pairs?  The second components (being part of an oracle that we do not control) can be arbitrary. What we could require is that the conditional distribution on first components, for every fixed value of the second component, is uniform.  Note that this conditional distribution may be undefined if some string $B$ never appears as a condition (as the second component of a pair), or can be rather irrelevant if $B$ appears only finitely many times as a second component. 

But what if some $k$-bit string $B$ appears as a condition infinitely many times, but still is extremely rare? Should we require something about the conditional frequencies of the first components for the second component~$B$? The answer suggested in~\cite{nps} is \emph{no}, and this is logical if we are interested only in the base of the exponent that measures the capital growth: very slow growth is then indistinguishable from no growth. More formally, the following requirement for conditional normality is introduced in~\cite{nps}:  for every $N$ we take $N$ first pairs and get a distribution on $\mathbb{B}^k\times\mathbb{B}^k$; we require that the conditional entropy of the first component with the second component as a condition converges to $k$ as $N\to\infty$. In this way we combine the requirements for all conditions using frequencies of different conditions as their weights (so extremely rare conditions do not matter).

\subsection{What follows}

It turns out that the two approaches to relative normality described above give the same notion --- this is one of the main results of~\cite{nps}. (Note that the block approach is consistent with the permission to look ahead: condition block is provided in its entirety, not bit by bit.) Moreover, it is shown there that these two approaches lead to equivalent definitions of conditional finite-state dimension and conditional strong finite-state dimension. The proof in~\cite{nps} is quite technical, and in this note we show how these results can be obtained in a natural way by extending other definitions of finite-state dimensions (with finite-state complexity, superadditive functions etc.) to the conditional case and by proving equivalence of all these extended definitions (thus anwering also some questions left open in~\cite{nps}).

As we mentioned, a detailed discussion of different definitions of normality and finite-state dimension (for non-conditional case) can be found in~\cite{ks}; basic information about Kolmogorov complexity (that is sometimes used in our arguments) can be found, say, in~\cite{s}. Often the changes needed for the conditional case are minimal; still we try to provide a self-contained account of what happens for conditional case, explaining all essential steps.

In the next section we formulate several equivalent definitions of conditional finite-state dimensions. Then in Section~\ref{sec:proof} we explain why they are indeed equivalent to each other.

\section{Formal definitions}\label{sec:def}

Let $\alpha=a_1a_2\ldots$ and $\beta=b_1b_2\ldots$ be two bit sequences. We provide several (equivalent) definitions of conditional finite-state dimension $\dim(\alpha\cnd \beta)$ and conditional strong finite-state dimension $\Dim(\alpha\cnd\beta)$.

\subsection{Block frequencies}

Fix some positive integers $k$ (block size) and $N$ (the number of blocks). Then split first $Nk$ bits of $\alpha$ and $\beta$ into $N$ strings of length $k$. Denote these strings by $A_1,\ldots, A_N$ and $B_1,\ldots,B_N$, and consider $N$ pairs $(A_1,B_1),\ldots,(A_N,B_N)$. For a uniformly random $i\in \{1,\ldots,N\}$ we get a pair of jointly distributed random variables $\mathcal{A},\mathcal{B}$ whose values are $k$-bit strings $A_i$ and $B_i$. Let $H_{k,N} (\alpha\cnd \beta)$ be the conditional entropy of the variable $\mathcal{A}$ with $\mathcal{B}$ as the condition. Then let
\[
\dim (\alpha\cnd \beta)=\lim_k \liminf_N \frac{H_{k,N}(\alpha\cnd\beta)}{k},
\quad
\Dim (\alpha\cnd \beta)=\lim_k \limsup_N \frac{H_{k,N}(\alpha\cnd\beta)}{k}
\]
As we will see, the $\lim_k$ is guaranteed to exist and can be replaced by $\inf_k$ without changing the dimensions.

An alternative version (that gives the same dimensions) uses non-aligned blocks (sliding window) instead of aligned ones: we define $A_i$ and $B_i$ as
\[
A_i=a_i\ldots a_{i+k-1} 
\quad \text{and} \quad
B_i=b_i\ldots b_{i+k-1}
\]
and then continue as before.

\subsection{Finite-state gamblers}

Consider a game with infinite number of rounds. Initially the gambler has capital~$1$. In $i$th round, before seeing $a_i$, the gambler sees $b_i,\ldots,b_{i+c}$ (where $c$ is a constant that is chosen by the gambler before starting the game), and splits her current capital into two parts labeled $0$ and $1$ (the ratio is a rational number chosen by the gambler). Then $a_i$ is shown to the gambler, the part labeled $a_i$ is doubled and the other part is lost. 

In general, the gambler strategy $G$ is determined by the constant $c$ and a function that gets the history of the game (i.e., bits $a_1,\ldots,a_{i-1}$, $b_1,\ldots,b_{i+c}$) and determines the  ratio of the bets. However, we consider only \emph{finite-state} strategies. Finite-state strategy is defined in a natural way: there is some finite set $S$ of \emph{states} including some \emph{initial} state, and there is a transition function that for every current state $s$, last observed bit $a_{i-1}$ and oracle bits $b_i\ldots b_{i+c}$ determines the ratio of bets (a rational number) and the next state.

For every gambling strategy $G$ (finite-state or not) we consider the corresponding \emph{martingale} function whose value $m_G(a_1\ldots a_N\cnd b_1\ldots b_{N+c})$ is the gambler's capital after playing against $a_1\ldots a_N$ with oracle bits $b_1\ldots b_{N+c}$ according to the rules. Then we define dimensions as 
\[
\dim(\alpha\cnd\beta) = 1 - \sup_G \limsup_N \frac{\log m_G(a_1\ldots a_N\cnd b_1\ldots b_{N+c})}{N}
\]
and 
\[
\Dim(\alpha\cnd\beta) = 1 - \sup_G \liminf_N \frac{\log m_G(a_1\ldots a_N\cnd b_1\ldots b_{N+c})}{N}
\]
where the supremum is taken over all finite-state strategies $G$ (with arbitrary constants $c$).

\subsection{Automatic complexity}

Conditional Kolmogorov complexity $\KS_D (A\cnd B)$ for bit strings $A$ and $B$ is defined as
\[
\KS_D(A\cnd B) = \min\{ |P|\colon (A,B,P) \in D\}.
\]
Here $|P|$ stands for the length of the binary string $P$ and $D$ is a ternary relation on binary strings $A,B,P$ called a \emph{description mode}; we read $(A,B,P)\in D$ as ``$P$ is a description of $A$ given $B$''. In Kolmogorov complexity theory we require that $D$ is a function, namely, the first argument $A$ is a function of two other arguments $B$ and $P$, and this function should be computable; in other words, $D$ should be (computably) enumerable. For every $D$ that has these properties we consider the corresponding function $\KS_D$. There exist optimal description modes~$D$  that make $\KS_D$ minimal up to $O(1)$ additive term; we fix one of them and call $\KS_D(A\cnd B)$ the (plain) \emph{Kolmogorov complexity of $A$ given $B$}.

Now we want to adapt this definition to the finite-state case by defining a special class of ternary relations called \emph{automatic description modes} (here ``automatic'' is a synonym for ``finite-state'').

Let $G$ be some (finite directed) graph whose edges are labeled by pairs $(l,p)$ where $p$ is either a bit ($0/1$) or a special symbol $\eps$, and $l$ is either a pair of bits $(a,b)$ or a special symbol $\eps$. (The choice whether to use $\eps$ in a label or not is made separately for $p$ and for $(a,b)$ but not separately for $a$ and $b$.) For every path (walk) in $G$ we combine all labels on the edges and get three strings $A,B,P$ (made of corresponding bits; symbols $\eps$ are skipped). Note that $A$ and $B$ (but not $P$) in a triple corresponding to some path have the same length. 

In this way we get (for every labeled graph of the described type) a ternary relation $D(A,B,P)$ on binary strings. This relation is called an \emph{automatic description mode} if $A$ is a $O(1)$-valued function of $B$ and $P$, i.e., if there is some constant $c$ such that for every $B$ and $P$ there are at most $c$ values of $A$ such that $(A,B,P)\in D$. Then we define $\KS_D(A\cnd B)$ \emph{for binary strings $A$ and $B$ of the same length} as before, i.e., as the minimal length of $P$ such that $D(A,B,P)$ holds.

Then dimensions can be defined (for sequences $\alpha=a_1a_2\ldots$ and $\beta=b_1b_2\ldots$) as
\[
\dim(\alpha\cnd\beta) = \inf_D \liminf_N \frac{\KS_D(a_1\ldots a_N\cnd b_1\ldots b_N)}{N}
\]
and
\[
\Dim(\alpha\cnd\beta) = \inf_D \limsup_N \frac{\KS_D(a_1\ldots a_N\cnd b_1\ldots b_N)}{N}.
\]
The infimum is taken over all automatic description modes $D$.

\begin{remark} 
This definition may look strange at first: why we require the relation to be a \emph{multi-valued} function? This is done for an important technical reason: we do not fix the initial state (vertex) in our graph, and a path generating $(A,B,P)$ may have arbitrary first vertex (and arbitrary last vertex, too). In this setting the functionality requirement would be too strong. (The absence of the initial state is important for superadditivity, as we will see below.) On the other hand, the original definition of Kolmogorov complexity remains the same (up to $O(1)$-additive terms) if we use description modes that are $O(1)$-valued (computably) enumerable ternary relations on strings.  
\end{remark}

\begin{remark}
The connection between different notions of compressibility and entropy goes back to Shannon, and the relation between finite-state compression and block entropy was analyzed long ago in~\cite{zl} where the notion of finite-state (strong) dimension appeared (under the name of \emph{compressiblity} and denoted by~$\rho$). See~\cite[Section 6]{ks} for the historic account of these and subsequent developments. What seems to be new in our approach (used in~\cite{ks} for non-conditional complexity and adapted here for conditional complexity) is that we consider only decompression (as it is done in algorithmic information theory) and choose the technical details in a special way to guarantee superadditivity that plays an important technical role in the proofs.
\end{remark}

\subsection{Superadditive complexity measures}\label{subsec:super}

This definition of dimension is more technical, and its motivation comes from the proof of the equivalence between the entropy and complexity definitions of dimension. Still we provide it here since it is quite simple and instructive.

Let $K(A\cnd B)$ be a function on pairs of binary strings of the same length ($|A|=|B|$) with non-negative real values. We call it a \emph{superadditive complexity measure} if it satisfies two conditions:
\begin{itemize}
\item
$K(A_1A_2\cnd B_1B_2)\ge K(A_1\cnd B_1)+K(A_2\cnd B_2)$ for every strings $A_1,A_2,B_1,B_2$ such that $|A_1|=|B_1|$ and $|A_2|=|B_2|$ (\emph{superadditivity});
\item
there exists some constant $c$ such that for every number $m$ and string $B$ the number of strings $A$ such that $K(A\cnd B)\le m$ does not exceed $c2^m$ (\emph{calibration}).
\end{itemize}
\begin{remark}
The first condition somehow reflects the finite-state requirement; it says, roughly speaking,  that no information transfer happens between two stages where $A_1$ and $A_2$ are described. (It is technically important that we do not allow any increase and have no $O(1)$ term in the inequality). As we will see, for every automatic description mode the function $\KS_D$ is superadditive.  Note also that for the standard Kolmogorov complexity this requirement does \emph{not} hold: $K(A_1A_2\cnd B_1B_2)$ could be much smaller than $K(A_1\cnd B_1)+K(A_2\cnd B_2)$ if $A_1$ and $A_2$ share a lot of information. (The formula for the complexity of pairs guarantees \emph{sub}additivity with logarithmic precision.)
\end{remark}
\begin{remark}
The second requirement says that $K$ should not be ``too small'' (e.g., the zero function is not a superadditive complexity measure even though it is superadditive). It is fulfilled for conditional Kolmogorov complexity and for automatic complexity since (for a given condition) each string $P$ may be description of only one (for Kolmogorov complexity) or $O(1)$ (for automatic complexity) strings.
\end{remark}

Now the finite-state dimensions are defined as follows: 
\[
\dim(\alpha\cnd\beta) = \inf_K \liminf_N \frac{K(a_1\ldots a_N\cnd b_1\ldots b_N)}{N}
\]
and
\[
\Dim(\alpha\cnd\beta) = \inf_K \limsup_N \frac{K(a_1\ldots a_N\cnd b_1\ldots b_N)}{N}.
\]
The infimum is taken over all superadditive complexity measures $K$.

\subsection{Finite-state a priori complexity}

The next definition works as a bridge between the complexity-based definition and the gambling definition, but it has an independent motivation, too. In algorithmic information theory, in addition to complexity, we also consider the \emph{a priori probability}, a maximal semicomputable distribution. It exists in two versions: (1)~on integers, the so-called \emph{discrete} a priori probability, and (2)~on binary sequences, the \emph{continuous} a priori probability. Our finite-state version is similar to the continuous case, so let us recall the definition for this case.

Consider a probabilistic machine (= randomized algorithm) $M$ without input that probilistically generates output bits sequentially (with arbitrary delays). Then for every binary string $X$ we consider the probability $m_M(X)$ of the event ``at some moment the string $X$ is the output of the machine'' (after that moment some other output bits may or may not appear). By definition $m_M(\eps)=1$ for empty string $\eps$ (since at the beginning there was no output), and $m_M(X)\ge m_M(X0) + m_M(X1)$ since the two events in the right hand side are disjoint subsets of the event in the left hand side. The function $m_M$ is lower semicomputable (i.e., can be computably approximated from below),  and  every lower semicomputable function that has the above-mentioned properties is equal to $m_M$ for some machine $M$. There exist a ``universal'' machine $M$ that makes $m_M$ maximal (by simulating every other machine with some positive probability); we fix such a machine $M$ and call the function $m_M$ the \emph{continuous a priori probability}, sometimes denoted by $\mathsf{a}(X)$. Its minus logarithm $\KA(X)= - \log \mathsf{a}(X)$ is called the \emph{a priori complexity} of $X$ and coincides with plain Kolmogorov complexity $\KS(X)$ with logarithmic precision, up to $O(\log|X|)$ terms. (See~\cite{suv} for details.)  This definition does not mention oracles, but for the general (not finite-state) case relativization is easy: oracle access is allowed for all algorithms mentioned in the definition (in our case, for machine $M$ and for algorithms that approximate the function $m_M$ from below).

Now we modify this definition to adapt it to the finite-state case taking into account the synchronous access to bits of condition/oracle. Consider a graph with finitely many vertices (states).  For every state, and for every condition bit ($0$ or $1$), there are two outgoing edges that correspond to output bits $0$ and $1$, labeled with rational non-negative numbers with sum $1$. In total for every vertex we have two pairs with sum $1$: one for condition bit $0$ and one for condition bit $1$, each of these four numbers is written on the edge going to the next state after the transition. These numbers are interpreted as probabilities to emit $0$ and $1$ (and change the state according to their edges) being in a given state and reading a given condition (input) bit.

Let $M$ be such a labeled graph. Assume that some state $s$ and some input string $B=b_1\ldots b_t$ are fixed. Then a probabilistic process is defined: we start at state $s$ and use each input bit to determine the probabilities of transitions emitting output bits and changing the state. Then we consider the probability $m_{M,s}(A\cnd B)$ that an output string is~$A$.  We get (for each $B$) a probability distribution on strings $A$ that have the same length as the string $B$. 

Then, as in the algorithmic information theory, we define the \emph{finite-state a priori complexity}:
\[
\KA_M(A\cnd B) = -\log_2 \max_s m_{M,s}(A\cnd B);
\]
as we will see, the maximum over $s$ is  technically important since it makes the function $\KA_M$ superadditive.
Now the dimensions can be defined as follows:
\[
\dim(\alpha\cnd\beta) = \inf_{M,c} \liminf_N \frac{\KA_M(a_1\ldots a_N\cnd b_{1+c}\ldots b_{N+c})}{N}
\]
and
\[
\Dim(\alpha\cnd\beta) = \inf_{M,c} \limsup_N \frac{\KA_M(a_1\ldots a_N\cnd b_{1+c}\ldots b_{N+c})}{N}.
\]
Note that this definition, like the gambling one (and unlike the others) mentions the look-ahead constant explicitly.

\begin{remark}
In this definition we do not allow the random process to arrive to different states but output the same bit (for a given input bit). This is natural if we think about connection with finite-state gamblers (the probabilities to emit $0$ and $1$ correspond to the parts of the gambler's capital bet on $0$ and $1$), but a more general definition can also be used (see the remark at the end of Section~\ref{subsec:apriori}).
\end{remark}

\subsection{Main result}

\begin{theorem}[Nandakumar, Pulari, S~\cite{nps}, extended version]
All the five definitions of conditional finite-state dimension and conditional strong finite-state dimension given above are equivalent.
\end{theorem}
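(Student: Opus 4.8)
The plan is to establish equivalence of all five definitions by a cycle of inequalities, organized so that each individual implication is as routine as possible and the nontrivial combinatorial work is isolated. I would prove the chain
\[
(\text{block frequency}) \ge (\text{superadditive}) \ge (\text{automatic complexity}) \ge (\text{a priori}) \ge (\text{gambling}) \ge (\text{block frequency}),
\]
reading each inequality as a statement about both $\dim$ and $\Dim$ simultaneously (the arguments are uniform in whether one uses $\liminf_N$ or $\limsup_N$). Since every definition has the shape ``$\inf$ or $\sup$ over a class of finite-state objects of a $\liminf/\limsup$ of a per-block quantity,'' it suffices to convert objects from one class into the other while controlling the relevant per-block value $N$-uniformly (up to an additive $o(N)$ or a multiplicative factor that vanishes after dividing by $k$ and sending $k\to\infty$). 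I would first note, as a lemma used repeatedly, that $\KS_D$ and $\KA_M$ are superadditive: this follows from the absence of a fixed initial/final state, so that a description (resp.\ generating path) for a concatenation $A_1A_2$ restricts to descriptions of $A_1$ and of $A_2$ starting from the intermediate vertex, giving the required $\ge$ with no $O(1)$ loss. This immediately yields the two middle inequalities $(\text{superadditive}) \ge (\text{automatic complexity})$ and the fact that automatic complexity and a priori complexity are special superadditive measures.

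For the complexity-to-a-priori step, I would use the standard correspondence between description modes and semimeasures: given an automatic description mode $D$, the quantity $2^{-\KS_D(A\cnd B)}$ (summed over the $O(1)$ many $A$ per description) defines, after normalization absorbed into the finite-state graph, a finite-state semimeasure realizing $\KA_M \le \KS_D + O(1)$; conversely a generating graph $M$ gives a description mode by coding each output bit against its probability with a near-optimal prefix code, yielding $\KS_D \le \KA_M + o(N)$ along any sequence. The a priori-to-gambling step is essentially definitional: a finite-state semimeasure $m_{M,s}(\cdot \cnd \cdot)$ with the branching probabilities equal to the gambler's bet ratios gives $m_G(A\cnd B) = 2^{|A|} m_{M,s}(A\cnd B)$, so $\log m_G(a_1\dots a_N) = N - \KA_M(\cdots) - \log(\text{normalization})$, turning the $\inf$ of normalized a priori complexity into $1$ minus the $\sup$ of the gambler's growth rate; the maximization over start states $s$ in the definition of $\KA_M$ is exactly what lets us drop the dependence on the initial gambler state. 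The look-ahead constant $c$ threads through these two steps unchanged, since it appears identically in both definitions.

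The two genuinely substantive inequalities are the ``endpoints'' closing the cycle: $(\text{block frequency}) \ge (\text{superadditive})$ and $(\text{gambling}) \ge (\text{block frequency})$. For the first I would, given a sequence whose conditional block entropy $H_{k,N}(\alpha\cnd\beta)/k$ is small, build an explicit superadditive complexity measure by fixing a block length $k$ and coding each $k$-block $A_i$ by a near-optimal prefix code for the \emph{conditional} empirical distribution of $A$-blocks given the $B$-block value --- the measure $K$ is defined by concatenating these block codes, and superadditivity is automatic because blocks are coded independently. The delicate points here are that the code must be chosen from a fixed finite family so that $K$ is genuinely finite-state (this is where I would use a fixed quantization of conditional frequencies and a Kraft-type argument for calibration), and that one must show the resulting normalized complexity tracks the entropy up to the $o(N)$ and $k\to\infty$ errors, including the usual subtlety that aligned versus sliding blocks give the same value. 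For the last inequality, $(\text{gambling}) \ge (\text{block frequency})$, I would exhibit a finite-state gambler (or, more conveniently, a finite multi-account gale as permitted in Section~\ref{subsec:gambling-gales}) that for each fixed $k$ bets according to the conditional empirical block distribution and whose capital grows at rate $k - H_{k,N}(\alpha\cnd\beta)$; the main obstacle throughout is precisely this direction --- reconstructing on-line, with finite memory and bounded look-ahead, a betting strategy matching the off-line conditional block statistics, and handling the mismatch between the gambler's bit-by-bit play and the block-aligned entropy. I expect that controlling this mismatch (the loss incurred at block boundaries and from rare condition blocks $B$) will require the averaging-over-offsets trick and the observation, stressed in the introduction, that negligibly rare conditions contribute negligibly to both the entropy and the attainable capital growth, so they can be safely ignored by the gambler.
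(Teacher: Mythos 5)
Your cycle does not close: two of its five links are argued in the direction opposite to the one you need, and as a result the proposal never establishes the crucial ``impossibility'' half of the theorem, namely that no finite-state object (description mode, superadditive measure, generating process, or gambler) can beat the conditional block entropy. Concretely: (i)~the superadditivity of $\KS_D$ shows that every automatic complexity function \emph{is} a superadditive calibrated measure, hence $\inf_K\le\inf_D$, i.e.\ it gives $(\text{superadditive})\le(\text{automatic})$ --- the opposite of the link $(\text{superadditive})\ge(\text{automatic})$ you claim is ``immediate''; (ii)~for the link $(\text{gambling})\ge(\text{block frequency})$ you propose to \emph{exhibit} a gambler whose capital grows at rate $k-H_{k,N}(\alpha\cnd\beta)$, but exhibiting a successful gambler only lowers the gambling dimension, i.e.\ proves $(\text{block})\ge(\text{gambling})$, again the converse of what the cycle requires. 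With the links you actually justify, every quantity is bounded \emph{above} by the block-entropy dimension and nothing is bounded below by it, so equality does not follow. The missing ingredient is the paper's central lower-bound argument: superadditivity lets one bound $\KS_M(a_1\ldots a_t\cnd b_1\ldots b_t)$ from below by $\sum_i\KS_M(A_i\cnd B_i)$ over $k$-blocks, and then a Kraft/Shannon noiseless-coding lower bound (implemented via conditional prefix complexity, or the purely combinatorial argument of Remark~\ref{avoiding-complexity}) shows this sum is at least $u\,H_{k,u}(\alpha\cnd\beta)-u\,O(\log k)$; the same scheme, applied to $\KA_M$ and transported to martingales via $m_G(A\cnd B)=2^{|A|}m_{M,s}(A\cnd B)$, is what bounds every gambler's growth rate by $1-H_{k,N}/k+o(1)$. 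None of this appears in the proposal.

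There are secondary problems as well, even granting the correctly oriented links. Converting an automatic description mode $D$ into a finite-state semimeasure with $\KA_M\le\KS_D+O(1)$ is not the routine Kolmogorov-complexity summation you describe: the description-mode graphs are nondeterministic, have $\eps$-moves and no fixed start state, and the a priori processes are required to have exactly two outgoing edges per (state, condition bit) with probabilities summing to $1$; the paper deliberately avoids this conversion and instead relates both notions to block entropy separately. In the a priori-to-gambling step, the maximum over initial states in $\KA_M$ does \emph{not} let you ``drop the dependence on the initial gambler state'': a gambler must commit to one start state (and the construction naturally produces a disjoint union of many processes, one per mapping from conditions to distributions), and for different prefix lengths different components may be the profitable ones. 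Repairing this is exactly the point of the multi-account discussion and the periodic capital-redistribution trick in the paper's last section. Finally, a single superadditive measure built by concatenating per-block codes is not automatically superadditive when $|A_1|$ is not a multiple of $k$, and the unavoidable $+1$ Shannon--Fano overhead per block must be killed by passing to longer blocks (after checking that doubling the block length does not increase the conditional entropy per bit); these points need the finite-family-of-codes lemma and the block-doubling argument rather than a quantization of frequencies.
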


In the next section we prove this equivalence (and also mention some other variations that still give equivalent definitions), starting with block frequencies and automatic complexity.

\section{Proofs}\label{sec:proof}

\subsection {From block frequencies to automatic complexity}\label{subsec:block2auto}

We start by proving that the block entropy dimensions cannot be smaller than the corresponding automatic complexity dimensions. In other words, we assume that for some $k$ the limit ($\liminf$ or $\limsup$) of conditional $k$-bit block entropies is smaller than some threshold $\tau$, and we construct an automatic description mode for which the corresponding limit is also smaller than $\tau$.

The basic tool here is Shannon--Fano code. For a random variable with $n$ values that have probabilities $p_1,\ldots,p_n$ there exists a prefix-free code with codeword lengths $m_1,\ldots,m_n$ where $m_i = \lceil - \log p_i\rceil$, and therefore the average code length $\sum p_i m_i$ does not exceed $\sum p_i (-\log p_i) +1$, i.e., $H+1$ where $H$ is the entropy of the distribution. In our setting we apply Shannon--Fano code to the distribution on $k$-bit blocks (so $n=2^k$).

The (natural) decoder for a prefix-free code has  finite memory. The prefix-free codewords cover some subtree of a finite binary tree. We start at the root and follow the directions determined by input bits (and output nothing) until we reach the codeword. Then we go through a chain of states (no input, only output) and after producing $k$ output bits (the encoded block) return to the root and are ready to decode the next block. Decoding would be unique if we fix the initial state and final vertices of the path to be the root. In our setting, when they are not fixed, we have O(1) outputs per input, since some $O(1)$-length prefix of the sequence (that we want to decode) brings the automaton to the initial state, and there are only $O(1)$ possibilities for output when the codeword is read in the input sequence. The Shannon--Fano theorem guarantees that this automaton provides a description whose length (per block) is close to the entropy of the block distribution. But there are still three problems.

\begin{itemize}
\item
The prefix code provided by Shannon--Fano theorem depends on the distribution. But we have to deal with all distributions on blocks (we fixed $k$, the block size, but the distribution on first $N$ blocks may depend on $N$ and be arbitrary). 

\item We need to take into account the conditions. For every condition block we have some conditional distribution that is different for different conditions, and the corresponding prefix-free code also depends on the condition. We can access the condition bits in the automaton, but do we get the required $O(1)$-bound in the definition of automatic description mode?

\item Finally, the Shannon--Fano theorem has some overhead: the average length of code (per block) is bounded by $H+1$, not $H$.  This overhead ($+1$) then will be divided by block size $k$, but even $1/k$ overhead is bad for us: we need exact equality.
\end{itemize}

How do we deal with these problems?\footnote{In~\cite{ks}, unfortunately, this is done incorrectly for the case of strong dimensions (sorry!): Lemma 19.1 is not enough since it provides prefix code for $N$-bit strings that depends on $N$, and this cannot be used to construct one finite automaton for all $N$. We need to be more careful and construct a finite family of codes for strings of fixed length, as it is done in Lemma~\ref{universal} and Lemma~\ref{universal-quasi}.} For the first one, we use not one prefix-free code but a family of prefix-free codes provided by the following lemma.

\begin{lemma}\label{universal}
For arbitrary finite alphabet~$X$ there exists a finite family of binary prefix-free codes for $X$ such that for every distribution $P$ on $X$ some prefix code from the family has average code length at most $H(P)+1$.
\end{lemma}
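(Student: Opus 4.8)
The plan is to construct the finite family explicitly by \emph{quantizing} the probabilities. The key observation is that Shannon--Fano codes depend on the distribution $P$ only through the codeword lengths $m_x = \lceil -\log p_x \rceil$, and these are integers. So even though there are infinitely many distributions $P$ on the finite alphabet $X$, there are only finitely many possible \emph{length profiles} $(m_x)_{x \in X}$ that can arise, provided we can bound the lengths. If $p_x$ is very small, then $m_x$ can be arbitrarily large, so the naive set of profiles is infinite; the first step is therefore to argue that we need not encode rare letters efficiently at all.

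Concretely, I would fix a large integer $L$ (depending only on $|X|$, to be chosen at the end) and replace each probability $p_x$ by a ``rounded'' value: if $p_x \ge 2^{-L}$ we keep a length $m_x \in \{1,\ldots,L\}$ with $m_x = \lceil -\log p_x \rceil$, and if $p_x < 2^{-L}$ we simply assign the letter a codeword of some fixed length around $L + \log|X|$. Since there are at most $|X|$ letters, giving every rare letter a codeword of length $L + \lceil \log |X|\rceil$ costs at most $\sum_{x : p_x < 2^{-L}} p_x \cdot (L + \lceil \log|X|\rceil) \le |X| \cdot 2^{-L} \cdot (L + \lceil \log|X| \rceil)$ in expected length, which tends to $0$ as $L \to \infty$ and in particular is below the slack we have. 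For the letters with $p_x \ge 2^{-L}$ the usual Shannon--Fano bound $m_x \le -\log p_x + 1$ contributes at most $H(P)+1$ to the expected length as before. The next step is to check the Kraft inequality so that a prefix-free code with these exact lengths actually exists: the frequent letters satisfy $\sum 2^{-m_x} \le \sum 2^{\log p_x} = \sum p_x \le 1$, and the rare letters add at most $|X| \cdot 2^{-(L+\lceil \log|X|\rceil)} \le 2^{-L}$, so with $L$ chosen so that there is room (e.g. by slightly inflating the frequent lengths or simply noting the total is at most $1$ once we allow the tiny slack) Kraft holds and the code exists.

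The crucial payoff is that now every length profile $(m_x)_{x\in X}$ that we use has all entries bounded by a single constant $L + \lceil\log|X|\rceil$ depending only on $|X|$. Hence there are only finitely many such profiles, and for each profile we fix once and for all \emph{one} prefix-free code realizing it. This finite collection of codes is the desired family. By construction, for any distribution $P$, the profile obtained by the rounding procedure above yields a code in the family whose expected length is at most $H(P) + 1 + o_L(1)$; absorbing the vanishing term into the slack (by taking $L$ large enough that the rare-letter overhead is below, say, the difference between our bound and $H(P)+1$ --- which is safe because that overhead is bounded by a quantity independent of $P$) gives expected length at most $H(P)+1$.

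The main obstacle I anticipate is the bookkeeping around the rare letters: one must verify that assigning them fixed-length codewords neither violates the Kraft inequality nor pushes the expected length above $H(P)+1$ \emph{uniformly in $P$}, since $P$ ranges over all distributions including pathological ones where most mass sits on letters that are individually rare. The resolution is precisely that the total expected-length penalty from rare letters is bounded by $|X|\cdot 2^{-L}\cdot(L+\log|X|)$, a quantity that is uniform in $P$ and made as small as desired by the single choice of $L$; this is what lets a \emph{finite} family suffice for the uncountable set of distributions. Everything else --- existence of the Shannon--Fano code for a fixed profile, the $+1$ overhead, finiteness of the profile set --- is routine once the lengths are capped.
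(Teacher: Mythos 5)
Your argument is correct in substance but takes a genuinely different route from the paper. The paper's primary proof is a one-liner via Huffman codes: a Huffman code for an alphabet of size $n$ has all codeword lengths bounded by $n-1$, so only finitely many Huffman codes exist over all distributions, and each is optimal, hence no worse than Shannon--Fano, hence within $H(P)+1$; a compactness argument on the simplex of distributions is offered as an alternative. You instead quantize: cap the Shannon--Fano lengths at a constant $L$ depending only on $|X|$, give the rare letters ($p_x<2^{-L}$) a fixed long codeword, and observe that the resulting set of length profiles is finite. This is more explicit and more elementary (it needs neither Huffman's length bound nor compactness), at the cost of some bookkeeping that the Huffman route avoids entirely.

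Two small wrinkles in that bookkeeping deserve attention. First, the Kraft check as written can fail: the frequent letters alone can contribute $\sum 2^{-m_x}$ arbitrarily close to $1-\sum_{\text{rare}}p_x$, and since $\sum_{\text{rare}}p_x$ may be far smaller than the $|X|\cdot 2^{-(L+\lceil\log|X|\rceil)}$ added by the rare codewords, the total can exceed $1$. You anticipate this and your fixes work (inflate lengths by one, or renormalize $\max(p_x,2^{-L})$ before taking logarithms), but either fix turns the bound into $H(P)+1+\eps_L$ with $\eps_L>0$. Second, and relatedly, the claim that the vanishing rare-letter overhead can be ``absorbed into the slack'' is not quite right: in the worst case the slack of the frequent part over $H(P)+1$ is only $\sum_{\text{rare}}p_x$, which can be much smaller than the overhead $|X|2^{-L}(L+\lceil\log|X|\rceil)$, so you really prove the lemma with $H(P)+1+\eps$ (any fixed $\eps>0$, for $L$ large) or $H(P)+2$, not $H(P)+1$ exactly. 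This is harmless here --- the paper explicitly notes, for its own compactness variant, that a different additive constant does not matter for the dimension results --- but if you want the stated constant, the Huffman argument delivers it for free.
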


Here the average (in the average code length) is taken over $P$. The lemma says that the good code can be found in some finite family (whose size depends only on the alphabet size).

\begin{proof}
The simplest way to prove this lemma is to note that Huffman's construction of an optimal code gives a code where the maximal length of a codeword is bounded by the alphabet size (the reduction step decreases the alphabet size by $1$,  and adding trailing $0/1$ increases the length by $1$). So there are only finitely many optimal Huffman codes for an alphabet of a given size (whatever the distribution~is).

One can also use compactness argument: optimal code for some distribution is close to optimal for some neighborhood of that distribution, and these neighborhoods cover the compact space of all distributions (a simplex), so there is a finite cover. (We need to take into account the distributions on a simplex boundary whose Shannon--Fano code does not have a codeword for some letter; they should be replaced by other codes  to cover a neighborhood, and this causes $O(1)$ overhead, so we have a weaker bound with some other constant instead of $1$, but this does not matter.) 
\end{proof}

Using the lemma, we can now solve the first problem mentioned above by taking a disjoint sum of decoding automata for all codes in the family (this is possible, since we allow $O(1)$-valued functions anyway). Moreover, since we need to consider conditional codes, we construct a decoding automaton \emph{for each function that maps conditions to codes from our finite family}; the number of these automata increases exponentially, but it is still finite (for a given block size $k$), so their disjoint sum is still a finite automaton. Each decoding automaton (for every mapping) is constructed as follow: it \emph{guesses} the condition (before reading the first bits of the description) and then decodes the description according to the prefix-free code for the guessed condition. At the same time the automaton checks whether the guessed condition matches the actual one (there is no edge to the next state if there is a discrepancy). In this way all the wrong guesses do not add anything to the description mode relation, so the $O(1)$-requirement is satisfied.

The last problem is to deal with $O(1)$ overhead (unavoidable in the Shannon--Fano theorem). The solution is to double the block size until the overhead is negligible. Let us see how the entropies for $2k$-bit blocks are related to entropies for $k$-bit blocks. Consider $N$ first blocks of length $2k$ (covering $2kN$ bits both in the sequence and in the condition), and let $\mathcal{A}$ and $\mathcal{B}$ be the corresponding random variables whose values are $2k$-bit sequences. We can split each variable in two halves: $\mathcal{A}=\mathcal{A}^1 \mathcal{A}^2$, $\mathcal{B}=\mathcal{B}^1 \mathcal{B}^2$.  Each of the variables $\mathcal{A}^1, \mathcal{A}^2, \mathcal{B}^1,\mathcal{B}^2$ is defined on the same probability space (with $N$ elements) and takes $k$-bit values. Then
\[
H(\mathcal{A}^1\mathcal{A}^2\cnd \mathcal{B}^1\mathcal{B}^2)\le H(\mathcal{A}^1\cnd \mathcal{B}^1\mathcal{B}^2)+H(\mathcal{A}^2\cnd \mathcal{B}^1\mathcal{B}^2)\le H(\mathcal{A}^1\cnd \mathcal{B}^1)+H(\mathcal{A}^2\cnd \mathcal{B}^2).
\]
We will show now that the sum in the right hand side is smaller than $2\tau$ (assuming the entropy for $k$-bit blocks is smaller than $\tau$). Consider a random variable $z$ which takes values $1$ and $2$ with equal probability. After $z$ is chosen, take blocks $\mathcal{A}^z$ and $\mathcal{B}^z$ randomly from the corresponding distribution (for pairs $\mathcal{A}^1$, $\mathcal{B}^1$ or for pairs $\mathcal{A}^2$, $\mathcal{B}^2$). We get three jointly distributed variables $z$, $\mathcal{A}^z$, $\mathcal{B}^z$. If we omit $z$ and consider the other two variables $\mathcal{A}^z$ and $\mathcal{B}^z$, we get a distribution on pairs of $k$-bit blocks that is exactly the distribution for $2N$ pairs that we obtain when using blocks of size $k$, and $H(\mathcal{A}^z \cnd \mathcal{B}^z)$ is smaller than $\tau$. On the other hand, the smaller quantity $H(\mathcal{A}^z\cnd \mathcal{B}^z,z)$ equals 
\[
H(\mathcal{A}^1\cnd \mathcal{B}^1)\Pr[z=1] + H(\mathcal{A}^2\cnd \mathcal{B}^2)\Pr[z=2] = \frac{1}{2}\left(H(\mathcal{A}^1\cnd \mathcal{B}^1) + H(\mathcal{A}^2\cnd \mathcal{B}^2)\right)
\]
so $H(A^1\cnd B^1) + H(A^2\cnd B^2)<2\tau$, as we promised. So we may switch to larger blocks without increase in entropy (per bit), and the $O(1)$ overhead becomes twice less important. This construction then can be repeated (or we may switch directly from $k$ to any multiple of $k$).

This almost finishes the proof; there is only one subtle point that we missed. The doubling argument assumes that the value of $N$ where the block entropy was small, is \emph{even} (then we may compare the $k$-block entropy with $2k$-block entropy). However, we are not guaranteed that the values of $N$ that are important for $\liminf$, are even. To overcome this difficulty, we note that entropy function (for a given block size) is uniformly continuous on a simplex where it is defined, and adding/deleting one block for large $N$ causes only small change in the frequencies, so we may add or delete one block and still have entropy smaller than $\tau$. (In fact, for $\limsup$ we do not have this problem, since in this case entropy is smaller than $\tau$ for all large $N$, including large even values of~$N$.) 

This argument proves inequalities between automatic and block entropy definitions of dimension and strong dimension. Indeed, we have seen that for every $\eps$ and for every block size $k$ there exists an automatic description mode $M$ such that
\[
\liminf_t \frac{\KS_M(a_1\ldots a_t\cnd b_1\ldots b_t)}{t} \le \frac{\liminf_N H_{k,N}(\alpha,\beta)}{k} +\eps,
\]
When (for fixed $k$) we take $\inf_M$, as required by the complexity definition of dimension, $\eps$ disappears. Then we may add $\inf_k$ in the right hand side:
\[
\inf_M\  \liminf_t \frac{\KS_M(a_1\ldots a_t\cnd b_1\ldots b_t)}{t} \le \inf_k\frac{\liminf_N H_{k,N}(\alpha,\beta)}{k},
\]
and the same is true for $\limsup$ instead of $\liminf$ (in both places).

\subsection{From automatic complexity to block frequencies}

Now we want to prove that dimensions defined in terms of block frequencies cannot be larger than corresponding dimensions defined in terms of automatic complexity. Informally speaking, if block entropy for large enough blocks is big, then the automatic complexity should be also big. In this proof our main tool is superadditivity (defined in Section~\ref{subsec:super}).

\begin{lemma}
For every automatic description mode $M$ the corresponding function $\KS_M$ is superadditive: $\KS_M(A_1A_2\cnd B_1B_2)\ge \KS_M(A_1\cnd B_1)+\KS_M(A_2\cnd B_2)$ for every strings $A_1,A_2,B_1,B_2$ such that $|A_1|=|B_1|$ and $|A_2|=|B_2|$.
\end{lemma}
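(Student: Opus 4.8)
The plan is to argue directly from the definition of an automatic description mode as a labeled graph. Recall that $\KS_M(A\cnd B)$ is the minimal length of a string $P$ such that the triple $(A,B,P)$ is generated by some path (walk) in the graph $M$, where the first-component and second-component labels always appear together (both are bits, or both are $\eps$) so that $A$ and $B$ stay synchronized in length. The key structural fact I would exploit is that there is no fixed initial or final vertex: a path may start and end anywhere. This is precisely what makes concatenation of paths available to us.

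First I would fix strings $A_1,A_2,B_1,B_2$ with $|A_1|=|B_1|$ and $|A_2|=|B_2|$, and let $P$ be an \emph{optimal} description of $A_1A_2$ given $B_1B_2$, so $|P|=\KS_M(A_1A_2\cnd B_1B_2)$. By definition there is a path $\pi$ in $M$ generating the triple $(A_1A_2,\,B_1B_2,\,P)$. The crucial step is to locate the point along $\pi$ where the $A$/$B$ output has emitted exactly $|A_1|=|B_1|$ synchronized bits; because the $a$-label and $b$-label are produced together (or both suppressed), this cut point is well defined as a vertex $v$ on the path. Splitting $\pi$ at $v$ yields two subpaths $\pi_1$ and $\pi_2$: the prefix $\pi_1$ generates $(A_1,B_1,P_1)$ and the suffix $\pi_2$ generates $(A_2,B_2,P_2)$, where $P=P_1P_2$ is the corresponding split of the description string. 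Here I must be slightly careful that the $P$-labels and the $(a,b)$-labels are not forced to advance in lockstep, so the cut of $P$ into $P_1P_2$ is determined by which $p$-bits were emitted before versus after passing through $v$; this is fine because each edge contributes its labels atomically.

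Each subpath is itself a legitimate path in $M$ (recall there is no initial-state constraint), so $\pi_1$ witnesses $\KS_M(A_1\cnd B_1)\le |P_1|$ and $\pi_2$ witnesses $\KS_M(A_2\cnd B_2)\le |P_2|$. Adding these,
\[
\KS_M(A_1\cnd B_1)+\KS_M(A_2\cnd B_2)\le |P_1|+|P_2|=|P|=\KS_M(A_1A_2\cnd B_1B_2),
\]
which is exactly the claimed superadditivity. Note that no $O(1)$ slack appears, as required: splitting and recombining paths is exact precisely because the graph has no distinguished endpoints.

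The main obstacle I anticipate is purely bookkeeping around the $\eps$-labels: the cut vertex $v$ is defined by the $(a,b)$-output count reaching $|A_1|$, but $\eps$-edges on the $p$-coordinate (or on the $(a,b)$-coordinate) mean the three coordinates do not march in step, so I should phrase the split in terms of the edge sequence of $\pi$ rather than positions in the output strings, and verify that every edge is assigned cleanly to $\pi_1$ or $\pi_2$. A minor point to check is that the cut is unambiguous even if several consecutive $p$-only edges straddle $v$; any consistent convention works since it only affects how $P$ is apportioned, not the total length. Once this is handled the argument is immediate.
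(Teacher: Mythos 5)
Your proof is correct and is essentially identical to the paper's own argument: take an optimal path generating $(A_1A_2, B_1B_2, P)$, cut it at the point where all of $A_1$ and $B_1$ (which are synchronized) have been emitted, and use the absence of a fixed initial state to read the two subpaths as independent descriptions whose lengths sum to $|P|$. Your extra care about apportioning the $p$-only edges at the cut is a harmless elaboration of the same idea.
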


\begin{proof}
Assume that $A_1A_2$ with condition $B_1B_2$ has some description $P$ with respect to $M$. This means that we can read $A_1A_2$ having condition $B_1B_2$ along some path in $M$. Choose a point in this path when all bits of $A_1$ are produced and all bits of $B_1$ are read. (Recall that input and output bits are synchronized.) This point divides $P$ into $P_1P_2$, where $P_1$ is a description of $A_1$ with condition $B_1$, and $P_2$ is a description of $A_2$ with condition $B_2$, so we get the required bound for the sum of complexities.
\end{proof}

\begin{remark}
This proof shows why we need to consider arbitrary initial (and final) state in the definition of automatic complexity: if we required all paths to start from some fixed initial state, then $P_2$ would not be a description. 
\end{remark}

Superadditivity gives us a natural way to get a lower bound for automatic complexity. Let $M$ be some automatic (conditional) description mode. To get a lower bound for $\KS_M(a_1\ldots a_t \cnd b_1\ldots b_t)$ for some prefixes of sequences $\alpha$ and $\beta$, we may take some block size $k$, split the prefixes into $k$-bit blocks and use superadditivity. Then we get a lower bound, namely, the sum of conditional automatic complexities $\KS_M(A_i\cnd B_i)$ for the blocks $A_i$ and $B_i$ (with the same $i$). Here $k$ may be arbitrary; it is possible that $t$ is not a multiple of $k$, then the last incomplete block can be discarded (its automatic complexity is non-negative), and we use only $u=\lfloor t/k\rfloor$ complete blocks for the lower bound. 

Now it is convenient to use the language of Kolmogorov complexity. We know that $\KS_M(A\cnd B) \ge \KS(A\cnd B)-c$ for some $c$ (depending only on $M$) and for all $A$ and $B$ of the same length ($k$ in our case). Note that $c$ does not depend on $k$. Therefore, we get the lower bound $\sum_{i=1}^u\KS(A_i\cnd B_i) - cu$. It is convenient to switch to prefix complexity $\KP(A_i\cnd B_i)$, it can be larger by $O(\log k)$, so we get the lower bound 
\[
\KS_M(a_1\ldots a_t \cnd b_1\ldots b_t) \ge \sum_{i=1}^u \KP(A_i\cnd B_i)-uO(\log k).
\]

Recall that the prefix complexity provides a prefix-free encoding for all strings (and therefore for $k$-bit blocks, too), and conditional prefix complexity (that we have here) provides a family of prefix-free encodings (for each condition we have some encoding). Now, for each value of condition (i.e., for each $k$-bit block $B$) we apply Shannon lower bound for average code length for the blocks $A_i$ such that $B_i=B$. Then these bounds are combined for all conditions, and the weights are frequencies of the conditions, so 
\[
\sum_{i=1}^u \KP(A_i\cnd B_i) \ge uH(\mathcal{A}\cnd \mathcal{B}),
\]
where $\mathcal{A}$ and $\mathcal{B}$ are random variables $A_i$ and $B_i$ for uniformly distributed $i\in \{1,\ldots,u\}$, i.e., the variables considered in the definition of block entropy, where $H(\mathcal{A}\cnd \mathcal{B})$ was denoted by $H_{k,u}(\alpha\cnd\beta)$. Combining all these inequalities and dividing by the number of bits $t$ (we assume for now that $t$ is a multiple of $k$) we get 
\[
\frac{\KS_M(a_1\ldots a_t \cnd b_1\ldots b_t)}{t}\ge \frac{H_{k, t/k}(\alpha\cnd\beta)}{k}- O\left(\frac{\log k}{k}\right)
\]
As $t$ goes to infinity (and $k$ is fixed), the incomplete block influence becomes negligible, and for $\liminf$ and $\limsup$ it does not matter, so we forget about our assumption ($t$ is a multiple of $k$) and have
\[
\liminf_t \frac{\KS_M(a_1\ldots a_t \cnd b_1\ldots b_t)}{t}\ge \liminf_N\frac{H_{k,N}(\alpha\cnd\beta)}{k}- O\left(\frac{\log k}{k}\right)
\]
for every block size $k$, and similar inequality with $\limsup$ in both sides. Then we take $\limsup_k$ in the right hand side, and the term $O(\log k / k)$ disappears. (Note that the left hand side does not depend on $k$.) After that the right hand side does not depend on $M$ (recall that the error term for fixed $k$ did depend on $M$), and we conclude that
\[
\inf_M\  \liminf_t \frac{\KS_M(a_1\ldots a_t \cnd b_1\ldots b_t)}{t}\ge \limsup_k \ \liminf_N\frac{H_{k,N}(\alpha\cnd\beta)}{k},
\]
and this is the required inequality for two definitions of finite-state conditional dimension. The same argument works for $\limsup$.

\begin{remark}
Note that we have $\limsup_k$ in the right hand side while we had $\inf_k$ in the reverse inequality of  Section~\ref{subsec:block2auto}. This shows that the limit over $k$ always exists and coincides with infimum over $k$ (so we may use $\lim_k$, $\limsup_k$, $\liminf_k$, or $\inf_k$ in the definition of dimension in terms of block entropy).
\end{remark}

\begin{remark}\label{avoiding-complexity}
One could wish to avoid using notions from Kolmogorov complexity theory in this argument. For that we could note that if every string can be a description for at most $O(1)$ strings, we can add $O(1)$ bits  to encoding to get a unique code (but not a prefix-free one).  To get a prefix-free code from the unique code we prepend each encoding by the prefix-free encoding of its length. This gives prefix-free code with only logarithmic overhead. (Note that we may assume without loss of generality that the description length is bounded by $ k+O(1)$, so the overhead is $O(\log k)$ for $k$-bit blocks.)
\end{remark}

\subsection{Non-aligned blocks}

Let us show that we may as well use non-aligned blocks in the definitions of finite-state dimensions. Fix some block length $k$. There are $k$ ways to split our sequences into $k$-bit blocks if we start with some incomplete block; they correspond to $k$ possible boundaries positions modulo $k$. For each position of boundaries we have some distribution on the first $N$ pairs of blocks, and these distributions are quite unrelated to each other.  The distribution of non-aligned blocks is the mixture of these~$k$ distributions (with equal weights).

If, for some $N$, \emph{one of these $k$ distributions} has small (conditional) entropy, then this distribution can be used to construct an automaton that gives small automatic complexity. We need only to change the automatic description mode by adding a possibility to emit at most $k$ arbitrary bits reading arbitrary condition bits at the beginning (and not using any description bits); this does not destroy $O(1)$-value property of the description relation. 

On the other hand, the superadditivity argument can be used for \emph{each of $k$ possible splittings}. This implies that we can use both minimal  or maximal entropy (among all $k$ ways of splitting) in the block entropy definition of dimensions. Note also that mixing these $k$ distributions for $k$-bit blocks we get the distribution for non-aligned $k$-bit blocks.

Let us compare the \emph{entropy of this average} (mixed) distribution and the \emph{average of entropies} of these $k$ distributions. The latter average is the entropy of the average distribution \emph{with additional condition} that is the choice of the splitting positions modulo $k$. This additional condition is a uniformly distributed variable with $k$ values, so its entropy is $\log k$ and  therefore adding this condition can decrease the entropy at most by $\log k$. It remains to note that $\log k /k \to 0$ (as $k\to \infty$) to see that we may use non-aligned distributions  and get equivalent entropy definitions for finite-state dimension and strong dimension.

\subsection{Superadditivity criterion}

Another byproduct of the argument above is the characterization of finite-state dimensions in terms of superadditive complexity measures. Indeed, the automatic complexity function $\KS_M$ for each automatic description mode $M$ is superadditive and calibrated. On the other hand, for every superadditive calibrated function we can apply the same argument that worked for the automatic complexity. The only technical problem is that we cannot claim that superadditive calibrated function is an upper bound for Kolmogorov (conditional) complexity. Still it is easy to see that it is an upper bound (with $O(1)$-precision) for Kolmogorov complexity with some oracle (and the rest of the argument remains unchanged). For example, we may use the function $K$ (an arbitrary superadditive complexity measure given to us; nothing is assumed about its computability) as an oracle, then for every $m$ and for every condition we can enumerate all strings that have the $K$-value at most $m$ for that condition; we have $O(2^m)$ of them and they can be encoded by programs of length $m+O(1)$ with oracle $K$. (Alternatively, we may use the combinatorial argument sketched in the Remark~\ref{avoiding-complexity} above to avoid Kolmogorov complexity completely.)

The superadditivity criterion will be useful also for the analysis of two remaining definitions that use finite-state a priori complexity and gambling.

\subsection{Finite-state a priori complexity}\label{subsec:apriori}

Let us show that a priori complexity definition of dimensions gives the same dimensions as the definitions we already studied. One inequality is guaranteed by the following lemma. Recall that finite-state a priori complexity was defined in terms of labeled graphs of a special type.

\begin{lemma}
For every labeled graph $M$ the function $\KA_M(A\cnd B)$ is superadditive.
\end{lemma}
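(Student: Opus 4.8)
The plan is to take logarithms, which turns the claimed additive inequality into a multiplicative one for the emission probabilities, and then to exploit the Markov property of the underlying random process. Since $\KA_M(A\cnd B)=-\log_2\max_s m_{M,s}(A\cnd B)$, the superadditivity inequality $\KA_M(A_1A_2\cnd B_1B_2)\ge \KA_M(A_1\cnd B_1)+\KA_M(A_2\cnd B_2)$ is equivalent to the product bound
\[
\max_s m_{M,s}(A_1A_2\cnd B_1B_2)\le \Bigl(\max_s m_{M,s}(A_1\cnd B_1)\Bigr)\Bigl(\max_s m_{M,s}(A_2\cnd B_2)\Bigr),
\]
so it suffices to establish this. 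Here $|A_1|=|B_1|$ and $|A_2|=|B_2|$, so all lengths match up and the expressions are well defined.

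The key step is to decompose a run by its intermediate state. Fix an initial state $s$ and the condition $B_1B_2$. Because input and output bits are synchronized, any run that outputs $A_1A_2$ passes, at the moment when $B_1$ has been entirely read and $A_1$ entirely emitted, through some intermediate state $s'$. Conditioning on $s'$ and using that future transitions depend only on the current state, the probability factors as
\[
m_{M,s}(A_1A_2\cnd B_1B_2)=\sum_{s'} p_{s\to s'}(A_1\cnd B_1)\, m_{M,s'}(A_2\cnd B_2),
\]
where $p_{s\to s'}(A_1\cnd B_1)$ is the probability that, starting in $s$ and reading $B_1$, the process outputs $A_1$ and halts this first stage in state $s'$. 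I would then bound the second factor by its maximum over states and pull it outside the sum:
\[
m_{M,s}(A_1A_2\cnd B_1B_2)\le \Bigl(\max_{s'} m_{M,s'}(A_2\cnd B_2)\Bigr)\sum_{s'} p_{s\to s'}(A_1\cnd B_1).
\]
The remaining sum over all terminal states $s'$ of the first stage is exactly the total probability of outputting $A_1$ from $s$, namely $m_{M,s}(A_1\cnd B_1)$, which is at most $\max_s m_{M,s}(A_1\cnd B_1)$. Taking the maximum over the initial state $s$ on the left then yields precisely the product bound above.

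The one place that needs care — and the reason the definition takes $\max_s$ rather than fixing an initial state — is this last bounding step: the intermediate state $s'$ is forced by the first stage of the run and is generally not the state maximizing $m_{M,s'}(A_2\cnd B_2)$. Replacing $m_{M,s'}$ by its maximum over all states is exactly what decouples the two stages and lets the Markov factorization go through. With a fixed initial (and final) state this estimate would break down, which mirrors the role played by the free initial and final vertices in the superadditivity proof for automatic complexity $\KS_M$.
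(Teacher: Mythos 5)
Your proof is correct and follows essentially the same route as the paper: decompose a run of the process by the intermediate state reached after $A_1$ is emitted and $B_1$ is read, bound the continuation probability by its maximum over states, and then take the maximum over the initial state. Your version merely spells out the sum over intermediate states that the paper compresses into a single inequality.
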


\begin{proof}
To prove this lemma, we need to provide an upper bound for the value $m_{M,s}(A_1A_2\cnd B_1B_2)$ for arbitrary state strings $A_1$, $B_1$ (of equal length), for arbitrary strings $A_2,B_2$ (that also have equal length), and for arbitrary initial state~$s$. To generate $A_1A_2$ with condition $B_1B_2$, starting from some state $s$, the probabilistic process should first generate $A_1$ with condition $B_1$ (coming to some random state $s'$), and then generate $A_2$ with condition $B_2$ starting from $s'$. Therefore, 
\[
m_{M,s}(A_1A_2\cnd B_1B_2) \le m_{M,s}(A_1\cnd B_1)\cdot
\max_{s'} m_{M,s'}(A_2\cnd B_2);
\]
now, taking maximum over $s$ and then taking logarithms, we get the desired inequality.~%
\end{proof}

The superadditivity property allows us to get a lower bound for 
\[
\KA_M(a_1\ldots a_N\cnd b_{1+c}\ldots b_{N+c})
\]
in the same way as we did for automatic complexity, i.e., by combining the bounds for $k$-bit blocks. Like automatic complexity, $\KA_M$ is an upper bound for Kolmogorov complexity, but for a different version of it (the a priori complexity $\KA$); this version also differs from prefix complexity by at most $O(\log k)$ for $k$-bit blocks, so this is not a problem. The problem is that in the lower bound we have complexities where the condition is shifted by $c$ positions, i.e., 
\[
\KA(a_{i+1}\ldots a_{i+k}\cnd b_{i+1+c}\ldots b_{i+k+c})
\]
and not
\[
\KA(a_{i+1}\ldots a_{i+k}\cnd b_{i+1}\ldots b_{i+k}),
\]
while we know how to get a lower bound in terms of block entropy only for the latter (non-shifted) version. However, the condition in the first case has only $c$ bits that are missing in the condition in the second case, so the first complexity may be smaller at most by $O(c)$. Since $c$ is fixed, and the block length $k$ goes to infinity, this change is negligible, and we conclude that dimensions defined in terms of finite state a priori complexity are not smaller that dimensions defined in terms of block entropy.

To get an inequality in the other direction, let us consider first the unconditional case. Consider a finite bit sequence split into several $k$-bit blocks; assume we have $N$ blocks $A_1,\ldots,A_N$, each of length $k$. We want to have an upper bound for a priori complexity $\KA_M(A_1\ldots A_N)$ of this sequence for some finite-state probabilistic process~$M$.  We construct $M$ starting from some probability distribution $P$ on blocks; it will generate blocks independently according to $P$. The probabilistic process $M$ starts from the root of a full binary tree of height $k$ and then traverses the tree generating a $k$-bit block with the probability taken from~$P$. (For example, the root has two outgoing edges, generating bits $0$ and $1$, and the probability to generate $0/1$ is the total $P$-probability of all blocks that start with $0/1$.) After $k$ bits (forming a block) are generated, the process returns to root, and the next block is generated in the same way independently from the previous ones. 

For such a process, the probability to generate a sequence $A_1\ldots A_N$ is a product of the probabilities of individual blocks according to the chosen distribution, so 
\[
\KA_M(A_1\ldots A_N) \le N \sum_B Q(B)\log\frac{1}{P(B)} 
\]
where the sum is taken over all $k$-bit blocks, $Q(B)$ is the frequency of $B$ in $A_1,\ldots,A_N$, and $P(B)$ is the probability of block $B$ in the distribution used to construct $M$.

As we know from the proof of the Shannon--Fano theorem (using Jensen's inequality, the convexity of the logarithm function), the right hand side, for given $Q$, is minimal when $P$ equals $Q$, and is $H(Q)$. But we need to construct one process $M$ that gives good bounds for all the prefixes of $\alpha$. So we are in the same situation as in Section~\ref{subsec:block2auto} and use a similar lemma (for distributions instead of codes).

\begin{lemma}\label{universal-quasi}
For arbitrary finite alphabet $X$ there exists a finite family $\mathcal{P}$ of probability distributions on $X$ with rational values such that for every distribution $Q$ on $X$ there exist some distribution $P$  from the family $\mathcal{P}$ such that
\[
\sum_{x\in X} Q(x)\log\frac{1}{P(x)} \le H(Q) + 1.
\]
\end{lemma}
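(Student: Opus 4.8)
The plan is to reduce this lemma to Lemma~\ref{universal} via the standard correspondence between prefix-free codes and dyadic distributions. The key observation is that the left-hand side $\sum_{x\in X} Q(x)\log\frac{1}{P(x)}$ is exactly the cross-entropy, and for a distribution of the form $P(x)=2^{-\ell_x}$ (with non-negative integers $\ell_x$) it equals the average code length $\sum_{x} Q(x)\ell_x$ of a prefix-free code having those codeword lengths. Thus the quantity we must bound is literally an average code length, which is precisely what Lemma~\ref{universal} controls.

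First I would invoke Lemma~\ref{universal} to get a finite family of binary prefix-free codes for $X$ such that for every distribution $Q$ some code in the family has average length at most $H(Q)+1$. I may assume these codes are complete (for instance the Huffman codes produced in the proof of Lemma~\ref{universal} are complete); for a complete prefix code Kraft's inequality is an equality, so setting $P(x)=2^{-\ell_x}$ defines a genuine probability distribution with dyadic, hence rational, values. Taking $\mathcal{P}$ to be the finite family of distributions obtained by converting each code into its dyadic distribution, and selecting for a given $Q$ the code whose average length is at most $H(Q)+1$, its distribution satisfies $\sum_{x} Q(x)\log\frac{1}{P(x)} = \sum_{x} Q(x)\ell_x \le H(Q)+1$, which is the required bound.

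The only point needing care is completeness, which is what guarantees that each $P$ is a bona fide distribution rather than a subdistribution. If a code from Lemma~\ref{universal} is not complete, so that $s=\sum_{x} 2^{-\ell_x}<1$, I would simply normalize by setting $P(x)=2^{-\ell_x}/s$: since $s$ is a sum of dyadic rationals it is rational, so $P$ stays rational, and since $P(x)\ge 2^{-\ell_x}$ we have $\log\frac{1}{P(x)}\le \ell_x$, so the bound only improves. Alternatively one could argue directly by the compactness method used for Lemma~\ref{universal}, taking $P=Q$ (so that the cross-entropy equals $H(Q)$) at interior points and covering the simplex by finitely many neighborhoods on which a fixed rational $P$ works, the boundary being harmless because there $Q(x)=0$ kills the potentially large term $\log\frac{1}{P(x)}$. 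I expect no real obstacle here; the main subtlety is just keeping $P$ a genuine rational distribution, which the normalization step settles cleanly.
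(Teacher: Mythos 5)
Your proposal is correct and follows essentially the same route as the paper: convert the finite family of prefix-free codes from Lemma~\ref{universal} into dyadic distributions via $P(x)=2^{-\ell_x}$, and use Kraft's inequality to turn each subdistribution into a genuine rational distribution in a way that only decreases $\log\frac{1}{P(x)}$. Your explicit normalization by $s=\sum_x 2^{-\ell_x}$ is just one concrete way of doing the ``increase to sum~$1$'' step that the paper leaves implicit.
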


\begin{proof}
This lemma is a consequence of the similar lemma in Section~\ref{subsec:block2auto}, since we may consider family of codes from that lemma and convert them to distributions (codelength $l$ corresponds to probability $2^{-l}$; Kraft's lemma guarantees that the sum of these probabilities does not exceed $1$ and we may increase them to get exactly the sum~$1$).
\end{proof}

\begin{remark}
Recalling the Shannon--Fano theorem and its proof, one can think about probability distributions as ``relaxed codes'': when the code length $l_i = \log (1/p_i)$ is not required to be integer; so our task for $\KA_M$ is easier that the corresponding task for $\KS_M$.
\end{remark}

\begin{remark}
Due to this relaxation, we can replace the constant $1$ in the statement of the lemma by arbitrarily small positive number~$\eps$. This strong version of the lemma can be proven using the compactness argument: for each $Q$ we may find some $P$ (not on the boundary of the simplex of distributions) that $\eps$-serves this $Q$; it has to $\eps$-serve $Q$ together with some small neighborhood, and then we can find a finite cover using compactness. So the increase in block size (that was necessary to avoid overhead for codes) may be avoided now. But we already know that it is possible, so there is no reason to avoid it.
\end{remark}

Now, using this lemma, we finish the proof in the same way as in Section~\ref{subsec:block2auto}. Having a finite family of graphs corresponding to the distributions from the Lemma, we take a disjoint union of them. Our definition (where we take maximum probability over all initial states) guarantees that all the bounds provided by the individual processes remain valid for the disjoint union.

This was the argument for unconditional case, but we have to deal with the conditions. Our random process uses the probability distribution on blocks that depends on the condition block.  So our random process uses the look-ahead bits (we need $k$ of them, where $k$ is the block size) and a finite memory that stores them.  More precisely, we consider all mappings of $\mathbb{B}^k$ to $\mathcal{P}$ where $\mathbb{B}^k$ is the set of all $k$-bit strings, and $\mathcal{P}$ is the family of distributions from the Lemma. For every mapping $F$ of this type and for all possible values of the first condition block we construct a random process with look-ahead $k$ that for every condition $B$ generates bits according to $F(B)$. (Note that the first condition block is known, the second condition block is read in parallel with generating the first $k$ bits, and so on.)  The we take the disjoint union of those processes for all $F$ and for all values of the first condition block. 

\begin{remark}
One can use a bigger class of random processes in the definition of a priori finite-state complexity. Namely, one may allow several outgoing edges with the same output letter and the same input letter but going to different vertices. (In this way the current state is not determined by input and output bits.) This wider class still gives a superadditive function $\KA_M$, so the first part of the proof remains valid (and the second part becomes easier when we extend the class).
\end{remark}

\subsection{Gambling definition}

The gambling characterization is essentially a reformulation of the a priori complexity characterization. In general, there is one-to-one correspondence between martingales and measures. Recall that martingale satisfies the condition
\[
m(X) = \frac{m(X0)+m(X1)}{2},
\]
while measures satisfy similar condition without factor $1/2$. Every martingale therefore can be represented as
\[
m(X) = \frac{P(X)}{2^{-|X|}},
\]
where $P$ is some measure on the Cantor space of infinite binary sequences, $P(X)$ is the probability for the sequence to start with $X$, and $|X|$ stands for the length of $|X|$. In other words, a martingale is the ratio of two measures: $P$ and the uniform measure on the Cantor space.

The same correspondence can be explained as follows. Assume we have some measure $P$ on the Cantor space. The corresponding gambling strategy divides the initial capital in proportion $P(0)\,{:}\,P(1)$ between bets on $0$ and $1$.  The same is done at the following stage; we use the conditional probabilities of $0$ and $1$ after already known prefix to determine the next bets. Finite-state martingales correspond to output measures of finite-state random processes of the type we considered, \emph{with fixed initial state}. This correspondence works as well for the conditional case.

It remains to show that the argument in the previous section (its second part, where we construct the upper bound for the finite-state a priori complexity) can be modified to get a random process with fixed initial state. Recall that we had a disjoint sum of many random processes, one for each function $F$ that maps condition blocks to distributions on blocks. So some modification is indeed needed, without it we have not one gambler but a finite family of gamblers, each with some capital that can be used for bets. It is quite possible that for prefixes (of $\alpha$ and $\beta$) of different length different gamblers become rich, and in the gambling characterization we need \emph{one gambler for all prefixes}. 

Note that a sum of martingales is a martingale, so we can consider the set of gamblers as one financial institution that makes bets. The problem is that this new combined martingale is no more a finite-state one, since the resulting proportion for the entire institution depends not only on the proportions chosen by individual gamblers, but also on their current capitals, so infinitely many ratios are possible. So to make the argument above valid for gambling, one could extend the notion of finite-state gambling by considering several gamblers with separated accounts. As we have noted in Section~\ref{subsec:gambling-gales}, this notion is known under the name ``$k$-account $s$-gale''. 

However, we may modify the construction to get one finite-state gambler instead of many. Fix some (large) number $T$. Let us agree that after each $T$ games the individual gamblers redistribute their money evenly between them and continue to play with the average. Now the entire institution will behave as a finite-state gambler with a rather large (but finite) number of states. Of course, this redistribution could be a loss for an individual gambler: in the worst case her capital is divided by some constant (the number of individual gamblers) once in $T$ games. But still $T$ can be arbitrarily large, so the change in the base of the growth exponent can be arbitrarily small, and this finishes the proof.

\section*{Final remarks}

\textbf{Questions}. Returning to the original motivation, one could ask the following natural questions about other possible approaches:

\begin{itemize}

\item Is there some reasonable notion of on-line conditional normality where the look-ahead access to condition bits is not allowed?

\item Is there some reasonable notion of ``strict'' normality when even a slow increase of capital is not allowed?

\item We considered normality with respect to the uniform Bernoulli measure. Is is possible to extend this notion and its equivalent characterizations to some wider class of measures (Markov chains, or non-uniform  Bernoulli measures, or some classes of product measures)?

\end{itemize}

\textbf{Acknowledgements}.  I am grateful to all the colleagues in LIRMM (Montpellier) and other places with whom I discussed the notions of normality and finite-state dimensions, especially to Ruslan Ishkuvatov and Alexander Kozachinskiy. I am grateful to the anonymous referee (\emph{Theoretical Computer Science}) for important remarks.


\begin{thebibliography}{9}

\bibitem{nps}
Satyadev Nandakumar, Subin Pulari, Akhil S, \emph{Finite-State Relative Dimension, dimensions of A. P. subsequences and a Finite-State van Lambalgen's theorem}, \url{https://arxiv.org/abs/2305.06570}. Accepted for publication in \emph{Information and Computation}.

\bibitem{ks}
Alexander Kozachinskiy, Alexander Shen,
Automatic Kolmogorov complexity, normality and finite state dimension revisited, \emph{Journal of Computer and System Sciences}, \textbf{118}, 75--107 (2021), \url{https://doi.org/10.1016/j.jcss.2020.12.003}, see also \url{https://arxiv.org/abs/1701.09060}, version 6 (August 2020).

\bibitem{s}
Alexander Shen, Around Kolmogorov complexity: basic notions and results, in \emph{Measures of Complexity}, Festschrift for Alexey Chervonenkis, Springer, 2015, p.~75--115, see also \url{https://arxiv.org/abs/1504.04955}.

\bibitem{suv}
Alexander Shen, Vladimir Uspensky, Nikolay Vereshchagin, \emph{Kolmogorov complexity and algorithmic randomness}, AMS, 2017, (Mathematical Surveys and Monographs, Volume 220). See also \url{https://www.lirmm.fr/~ashen/kolmbook-eng-scan.pdf}

\bibitem{zl}
J.~Ziv, A.~Lempel, Compression of Individual Sequences via Variable-Rate Coding, \emph{IEEE Transactions on Information Theory}, \textbf{24}(5), 530--536 (September 1978), \url{https://ieeexplore.ieee.org/document/1055934}

\end{thebibliography}
\end{document}